\newtheorem{thm}{Theorem}
\begin{document}

\begin{CJK*}{GBK}{song}

\begin{center}
\LARGE\bf Criteria of genuine multipartite entanglement based on correlation tensors
\end{center}

\begin{center}
\rm Naihuan Jing,$^{1, 2,*}$ \  Meiming Zhang$^1$
\end{center}

\begin{center}
\begin{footnotesize} \sl
$^1$ Department of Mathematics, Shanghai University, Shanghai 200444, China  

$^2$ Department of Mathematics, North Carolina State University, Raleigh, NC 27695, USA

$^*$ Corresponding author: jing@ncsu.edu

\end{footnotesize}
\end{center}

\begin{center}
\begin{minipage}{15.5cm}
\parindent 20pt\footnotesize
We revisit the genuine multipartite entanglement by a simplified method, which only involves
the Schmidt decomposition and local unitary transformation.
We construct a local unitary equivalent class of the tri-qubit quantum state, then use the trace norm of the whole correlation tensor as a measurement to detect
genuine multipartite entanglement.
By detailed examples, we show our result can detect more genuinely entangled states.
Furthermore, we generalize the genuine multipartite entanglement criterion to tripartite higher-dimensional systems.
\end{minipage}
\end{center}

\begin{center}
\begin{minipage}{15.5cm}
\begin{minipage}[t]{2.3cm}{\bf Keywords:}\end{minipage}
\begin{minipage}[t]{13.1cm}
Genuine multipartite entanglement, Correlation tensor
\end{minipage}\par\vglue8pt

\end{minipage}
\end{center}
\section{Introduction}
Quantum entanglement [1] plays a crucial role in quantum information processing for enhancing performance over classical methods.
Multipartite entanglement, in particular, genuinely multipartite entangled (GME), is viewed as an essential resource to perform tasks such as quantum cryptography, quantum simulations, quantum networks, etc. [2-5].

For a given quantum state, how to determine whether it is entangled, how to measure the entanglement, and
how to use the entanglement property to better realize quantum information processing tasks are some of the key questions
requiring investigations. 
Numerous works have been done towards the detection of entanglement for bipartite states. However, how to quantify the genuine entanglement contained in the multipartite state is still lacking complete understanding.
To identify GME, Bell-like inequalities [6, 7, 8], various entanglement witnesses [9-14], Fisher information-based [15] and generalized concurrence for multi genuine entanglement [16-19] were proposed. 
Some entanglement criteria based on local uncertainty relations were presented to detect entangled states or bound entangled states [20-23].
It was realized that the norm of the correlation tensor can be used to study entanglement [24-31] as well as for determining partial or full separability in multipartite states [24-28] and GME in multipartite systems [29, 30, 31].
In recent years, methods for detecting bipartite entanglement based on partially transposed density matrices were also provided [32, 33].

In this paper, we would like to use local unitary equivalence to simplify the determination of GME. Our idea is to first characterize the genuine tripartite entanglement by using the Schmidt decomposition and local unitary transformation, and then use the trace norm of the correlation tensor to judge entanglement. 
The current study tries to streamline and simplify some of the recent works on GME, where only partial or submatrices of the correlation tensors are considered.
We find that by combining with local unitary equivalence nice and simple detection criteria for GME are accessible. The main idea is to cast the correlation tensor into a rectangular matrix
and then use the trace norm for our measurement of GME, which turns out to be a convenient benchmark in this regard. 

The paper is organized as follows. In Sec. 2, we derive the GME of tri-qubit quantum states in terms of the correlation tensors of the density matrices.
In Sec. 3, we generalize the GME criterion of $2\otimes2\otimes2$ quantum systems to $d\otimes d\otimes d$ quantum systems. By detailed examples, our results are seen to outperform previously published results. Finally the conclusion is summarized in Sec. 4.
\section{GME for tri-qubit quantum states}
Two bipartite states $\rho$ and $\rho'$ are said to be local unitary (LU) equivalent if there exist unitary operators $U_1$, and $U_2$ such that
\begin{eqnarray}
\rho'=(U_1\otimes U_2)\rho(U_1\otimes U_2)^\dag,
\end{eqnarray}
where $\dag$ denotes transpose and conjugate.
Let $\lambda_i$, $i=1,\cdots,d_n^2-1$, be the generators of  the special unitary Lie algebra $\mathfrak{su}(d_n)$. For any bipartite state $\rho\in\mathcal{H}_1^{d_1}\otimes\mathcal{H}_2^{d_2}$, $\rho$ can be denoted as $\rho=\frac{1}{d_1d_2}\sum_{f=0}^{d_1^2-1}\sum_{g=0}^{d_2^2-1}\langle\lambda_f\otimes\lambda_g\rangle\lambda_f\otimes\lambda_g$, where $\lambda_0$ denotes the identity operator $\mathbf{1}$ and $t_{f, g}=\langle\lambda_f\otimes\lambda_g\rangle=tr(\rho\lambda_f\otimes\lambda_g)$. Let $T=(t_{f, g})_{(d_1^2-1)\times (d_2^2-1)}$ be the matrix of the coefficients. The trace norm of a rectangular matrix $A\in\mathbb{R}^{m\times n}$ is defined as the sum of the singular values, i.e., $\|A\|_{tr}=\sum_i\sigma_i=tr\sqrt{A^\dag A}$, where $\sigma_i$, $i=1,\cdots,min(m,n)$, are the singular values of the matrix $A$ arranged in descending order.
If two bipartite states $\rho$ and $\rho'$ are LU equivalent, then the trace norm of $T$ is invariant, i.e, $\|T(\rho)\|_{tr}=\|T(\rho')\|_{tr}$ [35, 36]. Using the LU equivalence we can simplify the study of separability or entanglement of the density matrix. 

A tripartite quantum state $\rho\in\mathcal{H}_1^2\otimes\mathcal{H}_2^2\otimes\mathcal{H}_3^2$ can be represented as:
\begin{eqnarray}
\rho=\frac{1}{8}\sum_{f,g,h=0}^{3}\langle\lambda_f\otimes\lambda_g\otimes\lambda_h\rangle\lambda_f\otimes\lambda_g\otimes\lambda_h,
\end{eqnarray}
where $\lambda_0$ denotes the identity operator $\mathbf{1}=|0\rangle\langle 0|+|1\rangle\langle 1|$, $\lambda_i$, $i=1,2,3$, represent the Pauli operators, i.e,  $\lambda_1=|0\rangle\langle 0|-|1\rangle\langle 1|, \lambda_2=|0\rangle\langle 1|+|1\rangle\langle 0|, \lambda_3=-i|0\rangle\langle 1|+i|1\rangle\langle 0|$ with the normalization $tr(\lambda_m\lambda_n)=2\delta_{mn}$ and $\langle\lambda_f\otimes\lambda_g\otimes\lambda_h\rangle=tr(\rho\lambda_f\otimes\lambda_g\otimes\lambda_h)$.
The correlation tensor $T$ of $\rho$ collects coefficients of $\lambda_f\otimes\lambda_g\otimes\lambda_h$, i.e, $\langle\lambda_f\otimes\lambda_g\otimes\lambda_h\rangle=t_{f,g,h}$, $1\leq f,g,h\leq 3$, and we arrange it in the following way $(T=(t_{f, gh})_{3\times 9})$:
\begin{eqnarray}
T=
\left( \begin{array}{ccccccccccccccc}
            t_{1,1,1}& t_{1,1,2}&t_{1,1,3}&t_{1,2,1}&t_{1,2,2}&t_{1,2,3}&t_{1,3,1}&t_{1,3,2}&t_{1,3,3}\\
            t_{2,1,1}& t_{2,1,2}&t_{2,1,3}&t_{2,2,1}&t_{2,2,2}&t_{2,2,3}&t_{2,3,1}&t_{2,3,2}&t_{2,3,3}\\
            t_{3,1,1}& t_{3,1,2}&t_{3,1,3}&t_{3,2,1}&t_{3,2,2}&t_{3,2,3}&t_{3,3,1}&t_{3,3,2}&t_{3,3,3}\\
           \end{array}
      \right ).
\label{A18}
\end{eqnarray}

Denote bipartitions of a tripartite quantum state $\rho$ as follows: $1|23$, $2|13$, and $3|12$. A tripartite quantum state $\rho$ is {\it biseparable} if $\rho=\sum_{i}p_i|\psi_i\rangle^{1|23}\langle\psi_i|+\sum_{j}p_j|\psi_j\rangle^{2|13}\langle\psi_j|+\sum_{k}p_k|\psi_k\rangle^{3|12}\langle\psi_k|$ with $p_i,p_j,p_k\geq0$ and $\sum_{i}p_i+\sum_{j}p_j+\sum_{k}p_k=1$. Otherwise, $\rho$ is called {\it genuinely multipartite entangled} (GME).
It follows from
$\mathbf{1}=|0\rangle\langle 0|+|1\rangle\langle 1|, \lambda_1=|0\rangle\langle 0|-|1\rangle\langle 1|, \lambda_2=|0\rangle\langle 1|+|1\rangle\langle 0|, \lambda_3=-i|0\rangle\langle 1|+i|1\rangle\langle 0|$ that
\begin{eqnarray}
\left\{
\begin{array}{lr}
|0\rangle\langle 0|=\frac12(\mathbf{1}+\lambda_1),\\
|1\rangle\langle 1|=\frac12(\mathbf{1}-\lambda_1),\\
|0\rangle\langle 1|=\frac12(\lambda_2+i\lambda_3),\\
|1\rangle\langle 0|=\frac12(\lambda_2-i\lambda_3).
 \end{array}
\right.
\end{eqnarray}

\begin{thm}\label{thm:1}
A tri-qubit mixed quantum state $\rho\in\mathcal{H}_1^2\otimes\mathcal{H}_2^2\otimes\mathcal{H}_3^2$ is GME, if
\begin{eqnarray}
\|T\|_{tr}>\sqrt{3}.
\end{eqnarray}
\end{thm}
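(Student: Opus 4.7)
The plan is to prove the contrapositive: if $\rho$ is biseparable then $\|T(\rho)\|_{tr}\le\sqrt{3}$. Since any biseparable $\rho$ is a convex combination $\rho=\sum_m q_m|\psi_m\rangle\langle\psi_m|$ with each $|\psi_m\rangle$ pure and product across some bipartition, and since $\rho\mapsto T(\rho)$ is linear, the triangle inequality for the trace norm reduces the task to showing $\|T(|\psi\rangle\langle\psi|)\|_{tr}\le\sqrt{3}$ for every pure $|\psi\rangle$ separable across $1|23$, $2|13$, or $3|12$. Once this pure-state bound is in hand, $\|T(\rho)\|_{tr}\le\sum_m q_m\|T(|\psi_m\rangle\langle\psi_m|)\|_{tr}\le\sqrt{3}$.

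Next, I would exploit local unitary equivalence. For a pure state $|\psi\rangle=|\phi\rangle_1\otimes|\chi\rangle_{23}$ separable across the $1|23$ cut, apply $U_1$ to send $|\phi\rangle$ to $|0\rangle$ and use the Schmidt decomposition on $|\chi\rangle$ via $U_2\otimes U_3$ to reach $|\chi\rangle=a|00\rangle+b|11\rangle$ with $a^2+b^2=1$. Under any $U_1\otimes U_2\otimes U_3$ the correlation tensor transforms by the associated $SO(3)$ rotations on each Pauli slot, so the $3\times 9$ matricization satisfies $T\mapsto R_1 T(R_2\otimes R_3)^T$ with $R_i$ orthogonal, and $\|T\|_{tr}$ is invariant. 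Substituting the Pauli identities for $|0\rangle\langle 0|,|1\rangle\langle 1|,|0\rangle\langle 1|,|1\rangle\langle 0|$ into $\rho=|\psi\rangle\langle\psi|$ for the standard form $|\psi\rangle=a|000\rangle+b|011\rangle$ collapses the expansion so that the only nonzero entries of $T$ with indices in $\{1,2,3\}$ are $t_{1,1,1}=1$, $t_{1,2,2}=2ab$, $t_{1,3,3}=-2ab$. Hence $T$ is rank one, its lone singular value equals $\sqrt{1+8a^2b^2}$, and the constraint $a^2+b^2=1$ together with $a^2b^2\le 1/4$ gives $\|T\|_{tr}\le\sqrt{3}$, with equality exactly at $a=b=1/\sqrt{2}$.

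The same blueprint --- LU-standardize, expand via the Pauli identities, read off the nonzero entries --- must then be carried out for the $2|13$ and $3|12$ cuts, with representative standard forms $a|000\rangle+b|101\rangle$ and $a|000\rangle+b|110\rangle$. This is where I expect the main obstacle to lie: the matricization $(t_{f,gh})_{3\times 9}$ treats the first subsystem asymmetrically, so when the distinguished subsystem on the biseparable cut is not the first one, the resulting $T$ need not retain the convenient rank-one structure that made the $1|23$ bound so clean, and the singular values must be analysed afresh in each case. The delicate step is therefore to verify that, despite this asymmetry, the same bound $\sqrt{3}$ continues to hold across all three bipartitions, either by direct singular-value computation of the new $T$ or by reinterpreting the asymmetric matricization through a reshuffling that puts the distinguished subsystem into the row index. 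Once all three bipartitions are dealt with, the opening convexity argument closes the proof.
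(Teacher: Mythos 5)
Your treatment of the $1|23$ cut and the final convexity step coincide with the paper's own argument. The gap is exactly the one you flagged yourself and then deferred: the $2|13$ and $3|12$ cases are not a routine repetition, and in fact the required bound fails there. For the cut $2|13$ the LU--standard form is $|\psi\rangle=|0\rangle_2\otimes(x|00\rangle+y|11\rangle)_{13}$, i.e.\ $x|000\rangle+y|101\rangle$ with $x^2+y^2=1$. A direct computation gives $t_{1,1,1}=1$, $t_{2,1,2}=2xy$, $t_{3,1,3}=-2xy$, and in the matricization $(t_{f,gh})_{3\times 9}$ these three entries occupy distinct rows \emph{and} distinct columns, so the singular values are $1,\,2xy,\,2xy$ and $\|T\|_{tr}=1+4xy$, which reaches $3$ at $x=y=1/\sqrt2$. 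The same happens for $3|12$ with $x|000\rangle+y|110\rangle$. The rank-one collapse that produces the clean bound $\sqrt{1+8x^2y^2}\le\sqrt3$ is special to the bipartition that matches the row index of the matricization, precisely as you suspected; for the other two cuts the trace norm of this $3\times 9$ matrix can be as large as $3$.

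For what it is worth, the paper's own proof does not close this gap either: after the $1|23$ computation it simply asserts that ``for $2|13$-separable and $3|12$-separable, we have the same result,'' and the biseparable state $\tfrac{1}{\sqrt2}(|000\rangle+|101\rangle)$, which has $\|T\|_{tr}=3>\sqrt3$ (larger even than the value $2\sqrt2$ the paper computes for the GHZ state), contradicts that assertion and hence the theorem as stated. A correct criterion along these lines must either bound all three matricizations (or a symmetrized combination of them, as in Ref.~[30]) or use the larger threshold $3$ for the single matricization. So your proposal cannot be completed as written, but your diagnosis of where the argument must break is exactly right.
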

\begin{proof}
First we consider a pure tri-qubit quantum state $\rho=|\psi\rangle\langle\psi|$. Assume $|\psi\rangle$ is $1|23$-separable, which can be viewed as a bipartite state, then its Schmidt decomposition [37] up to LU is $|\psi\rangle\overset{\text{LU}}{=}|0\rangle|a\rangle$, where $|a\rangle\in \mathcal{H}_1^2\otimes\mathcal{H}_2^2$. We divide it into 3 cases:

(1)
If $|a\rangle$ is separable, say $|a\rangle=|0\rangle|0\rangle$, then
\begin{eqnarray}
\rho=|\psi\rangle\langle\psi|=|000\rangle\langle 000|=\frac{1}{8}(\mathbf{1}+\lambda_1)\otimes(\mathbf{1}+\lambda_1)\otimes(\mathbf{1}+\lambda_1).
\end{eqnarray}
So we have $t_{111}=1$ and $\|T\|_{tr}=1$.

(2)
If $|a\rangle$ is inseparable and $|a\rangle=x|00\rangle+y|11\rangle$, then
\begin{eqnarray}
\rho=|\psi\rangle\langle\psi|=x^2|000\rangle\langle 000|+xy(|000\rangle\langle011|+|011\rangle\langle000|)+y^2|011\rangle\langle011|.
\end{eqnarray}
So we have $t_{111}=x^2+y^2=1$, $t_{122}=2xy$, and $t_{133}=-2xy$, thus $\|T\|_{tr}=\sqrt{1+8x^2y^2}\leq\sqrt{3}$.

(3)
If $|a\rangle$ is inseparable and $|a\rangle=x|01\rangle+y|10\rangle$, then
\begin{eqnarray}
\rho=|\psi\rangle\langle\psi|=x^2|001\rangle\langle 001|+xy(|001\rangle\langle010|+|010\rangle\langle001|)+y^2|010\rangle\langle010|.
\end{eqnarray}
So we have $t_{111}=-(x^2+y^2)=-1$, $t_{122}=2xy$, and $t_{133}=2xy$, thus $\|T\|_{tr}=\sqrt{1+8x^2y^2}\leq\sqrt{3}$.\

For $2|13$-separable and $3|12$-separable, we have the same result. Therefore, if a pure tri-qubit quantum state is biseparable, then $\|T\|_{tr}\leq\sqrt{3}$.

When $\rho$ is a tripartite biseparable mixed state, i.e, $\rho=\sum_{i}p_i|\psi_i\rangle^{1|23}\langle\psi_i|+\sum_{j}p_j|\psi_j\rangle^{2|13}\langle\psi_j|+\sum_{k}p_k|\psi_k\rangle^{3|12}\langle\psi_k|$
where $p_i,p_j,p_k\geq0$ and $\sum_{i}p_i+\sum_{j}p_j+\sum_{k}p_k=1$, then
\begin{eqnarray}
\|T(\rho)\|_{tr}&\leq&\sum_{i}p_i\|T(|\psi_i\rangle^{1|23})\|_{tr}+\sum_{j}p_j\|T(|\psi_j\rangle^{2|13})\|_{tr}+\sum_{k}p_k\|T(|\psi_k\rangle^{3|12})\|_{tr} \nonumber\\
&\leq& (\sum_{i}p_i+\sum_{j}p_j+\sum_{k}p_k)\sqrt{3} \nonumber\\
&=&\sqrt{3}.
\end{eqnarray}
\end{proof}
{\bf Remark 1.}
For a bipartite state $\rho\in\mathcal{H}_1^m\otimes\mathcal{H}_2^n$, if $\rho$ is separable, then $\|T\|_{tr}\leq\sqrt{\frac{mn(m-1)(n-1)}{4}}$ according to [29]. For a tri-qubit state $\rho\in\mathcal{H}_1^2\otimes\mathcal{H}_2^2\otimes\mathcal{H}_3^2$ under bipartition viewed as a bipartite state, one has when $\|T\|_{tr}>\sqrt{\frac{2\cdot4\cdot(2-1)(4-1)}{4}}=\sqrt{6}$, the tri-qubit state is GME. By Thm. 1, we obtain that for the tri-qubit quantum state, when $\|T\|_{tr}>\sqrt{3}$, the state is GME. Thus, Thm. 1 can detect more GME states than [29].

\textit{\textbf{Example 1.}} Let us consider a mixed state of GHZ state with white noise:
\begin{eqnarray}
\rho=\frac{x}{8}\mathbf{1}_8+(1-x)|\psi\rangle\langle\psi|,
\end{eqnarray}
where $|\psi\rangle=\frac{1}{\sqrt{2}}(|000\rangle+|111\rangle)$, $x\in[0,1]$ and $\mathbf{1}_8$ stands for the $8\times8$ identity matrix. One computes that $t_{2,2,2}=1-x$, $t_{3,3,2}=x-1$, $t_{3,2,3}=x-1$, and $t_{2,3,3}=x-1$. Then we have
\begin{eqnarray}
T=
\left( \begin{array}{ccccccccccccccc}
            0& 0&0&0&0&0&0&0&0\\
            0& 0&0&0&1-x&0&0&0&x-1\\
            0& 0&0&0&0&x-1&0&x-1&0\\
           \end{array}
      \right ).
\label{A2}
\end{eqnarray}
Thus $\|T\|_{tr}=2\sqrt{2}(1-x)$. Let $f_1(x)=\|T\|_{tr}-\sqrt{3}=2\sqrt{2}(1-x)-\sqrt{3}$, 
which is a monotonic decreasing function (see the red solid straight line in Fig. 1). By our Thm. 1, we have that when $f_1(x)>0$, the state $\rho$ is GME for $0\leq x<1-\frac{\sqrt{6}}{4}\approx0.38$. While \cite[Thm.1]{29} 
says that when $f_2(x)=2\sqrt{2}(1-x)-\sqrt{6}>0$, i.e, $0\leq x<1-\frac{\sqrt{3}}{2}\approx0.13$, the state is GME. In Ref. [34], $\rho$ is genuinely tripartite entangled state if $f_3(x)=2-2x-\sqrt{3}>0$, i.e, $0\leq x<0.13$. Fig. 1 depicts the comparison, which show that our Thm. 1 can detect more GME states than both Refs. [29, 34].
\begin{figure}[!htb]
\centerline{\includegraphics[width=15cm]{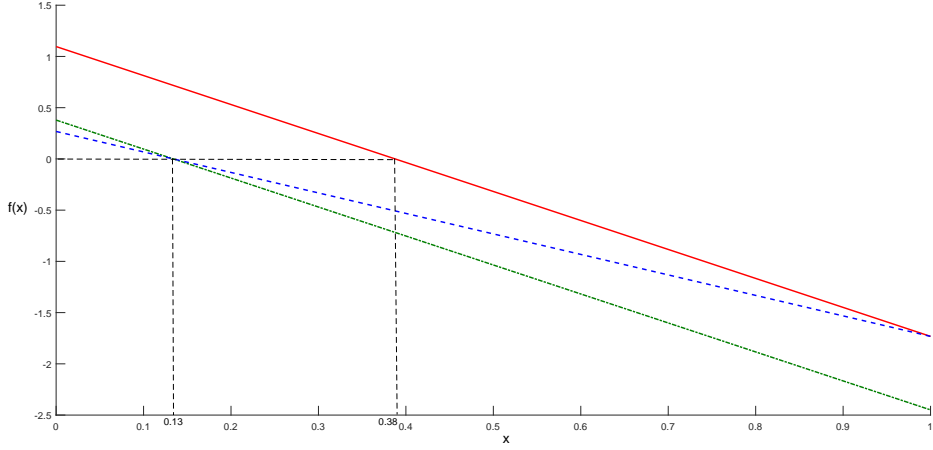}}
\renewcommand{\figurename}{Fig.}
\caption{
The red solid, green dashed and blue dashed straight lines represent $f_1(x)$ (from our Thm. 1), $f_2(x)$ (from [29]), and $f_3(x)$ (from [34]) respectively.}
\end{figure}

\textit{\textbf{Example 2.}} Next consider a generalized mixed W state,
\begin{eqnarray}
\rho=\frac{1-x}{8}\mathbf{1}_8+x|\varphi\rangle\langle\varphi|,
\end{eqnarray}
where $|\varphi\rangle=\frac{1}{\sqrt{3}}(|001\rangle+|010\rangle+|100\rangle)$, $x\in[0,1]$ and $\mathbf{1}_8$ stands for the $8\times8$ identity matrix.
By calculation, we have that $t_{111}=-x$, $t_{122}=\frac{2}{3}x$, $t_{212}=\frac{2}{3}x$, $t_{221}=\frac{2}{3}x$, $t_{133}=\frac{2}{3}x$, $t_{313}=\frac{2}{3}x$, $t_{331}=\frac{2}{3}x$, then
\begin{eqnarray}
T=
\left( \begin{array}{ccccccccccccccc}
            -x& 0&0&0&\frac{2}{3}x&0&0&0&\frac{2}{3}x\\
            0& \frac{2}{3}x&0&\frac{2}{3}x&0&0&0&0&0\\
            0& 0&\frac{2}{3}x&0&0&0&\frac{2}{3}x&0&0\\
           \end{array}
      \right ).
\label{A3}
\end{eqnarray}
Thus $\|T\|_{tr}=\frac{4\sqrt{2}+\sqrt{17}}{3}x$. It follows from Thm. 1 that when $\|T\|_{tr}>\sqrt{3}$, i.e, $0.5314\leq x\leq1$, the state $\rho$ is GME. In Ref. [38], $\rho$ is genuinely tripartite entangled state for $0.5464\leq x\leq1$. Therefore, our Thm. 1 can detect more GME states.

\section{GME for $d\otimes d\otimes d$ quantum states}
Now we consider genuine tripartite entanglement of $d\otimes d\otimes d$ quantum systems. Let $\lambda_m$, $m=1, \cdots d^2-1$ be the traceless Hermitian generators of $\mathfrak{su}(d)$, i.e,
\begin{eqnarray}
\lambda_m=\left\{
\begin{array}{lr}
\sqrt{\frac{2}{m(m+1)}}(\sum_{a=0}^{m-1}|a\rangle\langle a|-m|m\rangle\langle m|), \ \  m=1, \cdots,d-1;  \\
\lambda_{kj}=|k\rangle\langle j|+|j\rangle\langle k|,  \ \ 0\leq j<k\leq d-1 \ \ and \ \ m=d, \cdots, \frac{(d+2)(d-1)}2;\\
\lambda_{jk}=-i(|j\rangle\langle k|-|k\rangle\langle j|), \ \ 0\leq j<k\leq d-1\ \ and \ \  m=\frac{d(d+1)}2, \cdots, d^2-1,
 \end{array}
\right.
\end{eqnarray}
and $tr(\lambda_m\lambda_n)=d\delta_{mn}$.
Any tripartite quantum state $\rho\in\mathcal{H}_1^d\otimes\mathcal{H}_2^d\otimes\mathcal{H}_3^d$ can be represented as:
\begin{eqnarray}
\rho=\frac{1}{d^3}\sum_{f,g,h=0}^{d^2-1}\langle\lambda_f\otimes\lambda_g\otimes\lambda_h\rangle\lambda_f\otimes\lambda_g\otimes\lambda_h,
\end{eqnarray}
where $\lambda_0$ denotes the identity operator $\mathbf{1}$ and $\langle\lambda_f\otimes\lambda_g\otimes\lambda_h\rangle=tr(\rho\lambda_f\otimes\lambda_g\otimes\lambda_h)$.
Choosing the four generators $\lambda_{d-2}$, $\lambda_{d-1}$, $\lambda_{d-2\ d-1}$, and $\lambda_{d-1\ d-2}$, the correlation tensor $T$ of $\rho$ collects the coefficients of $\lambda_f\otimes\lambda_g\otimes\lambda_h$, i.e, $\langle\lambda_f\otimes\lambda_g\otimes\lambda_h\rangle=t_{f,g,h}$, $f,g,h\in\{d-2,d-1,d-2 \ d-1,d-1 \ d-2\}$. In the following, denote $d-1$ and $d-2$ by $-1$ and $-2$ respectively,  e.g., $t_{d-1,d-2,d-1 d-2}=t_{-1,-2,-1-2}$, $\lambda_{d-2\ d-1}=\lambda_{-2-1}$, $|d-1\rangle=|-1\rangle$. It is arranged as follows:

\begin{eqnarray}
T=
\left( \begin{array}{ccccccccccccccc}
            A_1&A_2&A_3&A_4
           \end{array}
      \right )
\label{A3b}
\end{eqnarray}
where
\begin{eqnarray}
A_1=
\left( \begin{array}{ccccccccccccccc}
            t_{-1,-1,-1}&t_{-1,-1,-2}&t_{-1,-1,-1-2}&t_{-1,-1,-2-1}\\
            t_{-2,-1,-1}&t_{-2,-1,-2}&t_{-2,-1,-1-2}&t_{-2,-1,-2-1}\\
            t_{-1-2,-1,-1}&t_{-1-2,-1,-2}&t_{-1-2,-1,-1-2}&t_{-1-2,-1,-2-1}\\
            t_{-2-1,-1,-1}&t_{-2-1,-1,-2}&t_{-2-1,-1,-1-2}&t_{-2-1,-1,-2-1}\\
           \end{array}
      \right ),
\label{A4}
\end{eqnarray}
\begin{eqnarray}
A_2=
\left( \begin{array}{ccccccccccccccc}
            t_{-1,-2,-1}&t_{-1,-2,-2}&t_{-1,-2,-1-2}&t_{-1,-2,-2-1}\\
            t_{-2,-2,-1}&t_{-2,-2,-2}&t_{-2,-2,-1-2}&t_{-2,-2,-2-1}\\
            t_{-1-2,-2,-1}&t_{-1-2,-2,-2}&t_{-1-2,-2,-1-2}&t_{-1-2,-2,-2-1}\\
            t_{-2-1,-2,-1}&t_{-2-1,-2,-2}&t_{-2-1,-2,-1-2}&t_{-2-1,-2,-2-1}\\
           \end{array}
      \right ),
\label{A5}
\end{eqnarray}
\begin{eqnarray}
A_3=
\left( \begin{array}{ccccccccccccccc}
            t_{-1,-1-2,-1}&t_{-1,-1-2,-2}&t_{-1,-1-2,-1-2}&t_{-1,-1-2,-2-1}\\
            t_{-2,-1-2,-1}&t_{-2,-1-2,-2}&t_{-2,-1-2,-1-2}&t_{-2,-1-2,-2-1}\\
            t_{-1-2,-1-2,-1}&t_{-1-2,-1-2,-2}&t_{-1-2,-1-2,-1-2}&t_{-1-2,-1-2,-2-1}\\
            t_{-2-1,-1-2,-1}&t_{-2-1,-1-2,-2}&t_{-2-1,-1-2,-1-2}&t_{-2-1,-1-2,-2-1}\\
           \end{array}
      \right ),
\label{A6}
\end{eqnarray}
\begin{eqnarray}
A_4=
\left( \begin{array}{ccccccccccccccc}
            t_{-1,-2-1,-1}&t_{-1,-2-1,-2}&t_{-1,-2-1,-1-2}&t_{-1,-2-1,-2-1}\\
            t_{-2,-2-1,-1}&t_{-2,-2-1,-2}&t_{-2,-2-1,-1-2}&t_{-2,-2-1,-2-1}\\
            t_{-1-2,-2-1,-1}&t_{-1-2,-2-1,-2}&t_{-1-2,-2-1,-1-2}&t_{-1-2,-2-1,-2-1}\\
            t_{-2-1,-2-1,-1}&t_{-2-1,-2-1,-2}&t_{-2-1,-2-1,-1-2}&t_{-2-1,-2-1,-2-1}\\
           \end{array}
      \right ).
\label{A7}
\end{eqnarray}

It follows from (14) that $\lambda_{-1}=\sqrt{\frac{2}{d(d-1)}}(\mathbf{1}-d|-1\rangle\langle-1|)$, $\lambda_{-2}=\sqrt{\frac{2}{(d-2)(d-1)}}(\mathbf{1}-|-1\rangle\langle-1|-(d-1)|-2\rangle\langle-2|)$, $\lambda_{-1-2}=|-1\rangle\langle-2|+|-2\rangle\langle-1|$, and $\lambda_{-2-1}=-i|-2\rangle\langle-1|+i|-1\rangle\langle-2|$. Thus,
\begin{eqnarray}
|-1\rangle\langle-1|&=&\frac{1}{d}(\mathbf{1}-\sqrt{\frac{d(d-1)}{2}}\lambda_{-1}),\\
|-2\rangle\langle-2|&=&\frac{1}{d}\mathbf{1}+\sqrt{\frac{1}{2d(d-1)}}\lambda_{-1}-\sqrt{\frac{d-2}{2(d-1)}}\lambda_{-2},\\
|-1\rangle\langle-2|&=&\frac{1}{2}(\lambda_{-1-2}-i\lambda_{-2-1}),\\
|-2\rangle\langle-1|&=&\frac{1}{2}(\lambda_{-1-2}+i\lambda_{-2-1}).
\end{eqnarray}

\begin{thm}\label{thm:2}
A tripartite mixed quantum state $\rho\in\mathcal{H}_1^d\otimes\mathcal{H}_2^d\otimes\mathcal{H}_3^d$ is GME, if
\begin{eqnarray}
\|T\|_{tr}>\sqrt{\frac{d^3(d-1)(d^2-d+1)}{8}}.
\end{eqnarray}
\end{thm}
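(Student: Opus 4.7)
The plan is to generalize the case analysis of Theorem 1 to $d$-dimensional subsystems while preserving the overall structure: reduce to a pure biseparable state, put it in canonical form via LU equivalence and Schmidt decomposition, compute the correlation tensor explicitly, and close the argument by the triangle inequality exactly as at the end of the proof of Theorem 1.

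By the symmetry of the three bipartitions, I would first treat a pure $1|23$-separable state $|\psi\rangle = |\phi\rangle_1 \otimes |a\rangle_{23}$. Using LU on subsystem 1, I would take $|\phi\rangle = |-1\rangle$, the basis vector that interacts cleanly with the four chosen generators $\lambda_{-1}, \lambda_{-2}, \lambda_{-1-2}, \lambda_{-2-1}$; from formulas (19)--(22) this kills $\langle-1|\lambda_f|-1\rangle$ for every $f \in \{-2, -1-2, -2-1\}$, leaving only $f = -1$ nonzero. Then Schmidt decomposition of $|a\rangle$, combined with LU on systems 2 and 3, brings $|a\rangle$ into the canonical form $\sum_i \mu_i |e_i\rangle|f_i\rangle$, and the proof proceeds by a case analysis split by the Schmidt rank of $|a\rangle$ and the placement of the Schmidt basis relative to the distinguished pair $\{|-2\rangle, |-1\rangle\}$, directly generalizing cases (1), (2), (3) of Theorem 1.

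Under this choice of $|\phi\rangle$, the tensor $T$ is rank-one as a $4 \times 16$ matrix, with only the $f = -1$ row nonzero, so $\|T\|_{tr}$ equals the Euclidean norm of that row, which factorizes as $\sqrt{2(d-1)/d}\cdot\bigl(\sum_{g,h\in S}\langle a|\lambda_g\otimes\lambda_h|a\rangle^2\bigr)^{1/2}$, where $S = \{-1, -2, -1-2, -2-1\}$. The main obstacle is bounding this second factor uniformly over pure $|a\rangle$. Unlike the qubit case, where the three chosen generators exhaust $\mathfrak{su}(2)$, here only four out of $d^2-1$ generators are used and they effectively probe only the two-dimensional subspace spanned by $|-1\rangle, |-2\rangle$, while the Schmidt decomposition of $|a\rangle$ may spread across all $d$ basis vectors; the diagonal generators $\lambda_{-1}, \lambda_{-2}$ still receive contributions from the identity parts acting on other Schmidt components, which complicates the optimization. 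The delicate step is to identify the worst-case configuration of $|a\rangle$ and to verify that the resulting bound is exactly $\sqrt{d^3(d-1)(d^2-d+1)/8}$; I expect the extremum to be attained at a configuration concentrating the Schmidt weight on $\{|-2\rangle, |-1\rangle\}^{\otimes 2}$ together with an appropriate contribution from the remaining basis vectors via the identity terms in $\lambda_{-1}, \lambda_{-2}$.

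Finally, the $2|13$ and $3|12$ cases are treated by symmetric arguments with the subsystem labels permuted, and the bound extends to mixed biseparable $\rho = \sum_\ell p_\ell |\psi_\ell\rangle\langle\psi_\ell|$ via $\|T(\rho)\|_{tr} \leq \sum_\ell p_\ell \|T(\rho_\ell)\|_{tr}$, exactly as in the last display of the proof of Theorem 1.
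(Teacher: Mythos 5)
Your setup matches the paper's: reduce to a pure $1|23$-separable state, normalize to $|\psi\rangle=|{-1}\rangle|a\rangle$ by local unitaries and the Schmidt decomposition, observe that only the $f=-1$ row of the $4\times 16$ matrix $T$ survives so that $\|T\|_{tr}$ is the Euclidean norm of that row, and finish with convexity for mixtures. All of that is fine. The gap is that you never actually prove the inequality the theorem rests on: you reduce everything to bounding $\sum_{g,h\in S}\langle a|\lambda_g\otimes\lambda_h|a\rangle^2$ over pure $|a\rangle$, correctly flag this as the delicate step, and then write that you ``expect'' the extremum to sit at a particular configuration. That expectation is the entire quantitative content of the theorem; without verifying it, the constant $\sqrt{d^3(d-1)(d^2-d+1)/8}$ is not established and the proof is not complete.

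For comparison, the paper closes this step by an explicit case analysis: it takes $|a\rangle$ of Schmidt rank at most two supported on $\mathrm{span}\{|{-1}\rangle,|{-2}\rangle\}$, namely $|a\rangle=x|{-1},{-1}\rangle+y|{-2},{-2}\rangle$ or $|a\rangle=x|{-1},{-2}\rangle+y|{-2},{-1}\rangle$, computes the handful of nonzero entries of $T$, and finds in both cases $\|T\|_{tr}=\frac{d^2\sqrt{d-1}}{2\sqrt{2}}\sqrt{\frac{1}{d}+d-2+4x^2y^2}$, which is maximized at $x^2=y^2=\frac{1}{2}$ and gives exactly the stated constant. Your worry about general $|a\rangle$ is, however, well taken: for $d>2$ the Schmidt decomposition of $|a\rangle$ may have up to $d$ terms whose bases need not lie in the distinguished two-dimensional subspace, and since $T$ is assembled from only four of the $d^2-1$ generators on each factor, its trace norm is not invariant under arbitrary local unitaries on subsystems 2 and 3, so one cannot freely rotate the Schmidt bases onto $\{|{-1}\rangle,|{-2}\rangle\}$ without changing $\|T\|_{tr}$. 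The paper's cases (1)--(3) do not address this either, so to complete your argument you must either supply a reduction lemma justifying the restriction to that subspace or carry out the optimization over general $|a\rangle$ directly; at present neither is done.
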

\begin{proof}
Enough to consider a pure tripartite quantum state $\rho=|\psi\rangle\langle\psi|\in\mathcal{H}_1^d\otimes\mathcal{H}_2^d\otimes\mathcal{H}_3^d$. Assume $|\psi\rangle$ is $1|23$-separable, its Schmidt decomposition is $|\psi\rangle\overset{\text{LU}}{=}|-1\rangle|a\rangle$, where $|a\rangle\in \mathcal{H}_2^{d}\otimes\mathcal{H}_3^{d}$. We discuss as follows:

(1) If $|a\rangle$ is separable, i.e, $|a\rangle=|-1\rangle|-1\rangle$, then
\begin{eqnarray}
\rho=|\psi\rangle\langle\psi|&=&|-1,-1,-1\rangle\langle -1,-1,-1|\nonumber\\
&=&\frac{1}{d^3}(\mathbf{1}-\sqrt{\frac{d(d-1)}{2}}\lambda_{-1})\otimes(\mathbf{1}-\sqrt{\frac{d(d-1)}{2}}\lambda_{-1})\otimes(\mathbf{1}-\sqrt{\frac{d(d-1)}{2}}\lambda_{-1}).
\end{eqnarray}
So we have $t_{-1,-1,-1}=tr(\rho\lambda_{-1}\otimes\lambda_{-1}\otimes\lambda_{-1})=-\frac{\sqrt{2d^3(d-1)^3}}{4}$, thus $\|T\|_{tr}=1$.

(2) If $|a\rangle$ is nonseparable and $|a\rangle=x|-1, -1\rangle+y|-2, -2\rangle$, then
\begin{eqnarray}
\rho=|\psi\rangle\langle\psi|&=&x^2|-1,-1,-1\rangle\langle-1,-1,-1|+xy(|-1,-1,-1\rangle\langle-1,-2,-2|\nonumber\\
&+&|-1,-2,-2\rangle\langle-1,-1,-1|)+y^2|-1,-2,-2\rangle\langle-1,-2,-2|.
\end{eqnarray}
So we have $t_{-1,-1,-1}=tr(\rho\lambda_{-1}\otimes\lambda_{-1}\otimes\lambda_{-1})=\frac{d^2\sqrt{d-1}}{2\sqrt{2}}[-(d-1)\sqrt{\frac{1}{d}}x^2-\frac1{d-1}\sqrt{\frac{1}{d}}y^2]$, $t_{-1,-1,-2}=tr(\rho\lambda_{-1}\otimes\lambda_{-1}\otimes\lambda_{-2})=\frac{d^2\sqrt{d-1}}{2\sqrt{2}}(\frac{\sqrt{d-2}}{d-1}y^2)$, $t_{-1,-2,-1}=tr(\rho\lambda_{-1}\otimes\lambda_{-2}\otimes\lambda_{-1})=\frac{d^2\sqrt{d-1}}{2\sqrt{2}}(\frac{\sqrt{d-2}}{d-1}y^2)$,
$t_{-1,-2,-2}=tr(\rho\lambda_{-1}\otimes\lambda_{-2}\otimes\lambda_{-1})=\frac{d^2\sqrt{d-1}}{2\sqrt{2}}(-\frac{d-2}{d-1}\sqrt{d}y^2)$,
$t_{-1,-1-2,-1-2}=tr(\rho\lambda_{-1}\otimes\lambda_{-1-2}\otimes\lambda_{-1-2})=\frac{d^2\sqrt{d-1}}{2\sqrt{2}}(-\sqrt{d}xy)$, $t_{-1,-2-1,-2-1}=tr(\rho\lambda_{-1}\otimes\lambda_{-2-1}\otimes\lambda_{-2-1})=\frac{d^2\sqrt{d-1}}{2\sqrt{2}}(\sqrt{d}xy)$, then
\begin{eqnarray}
T=
\left( \begin{array}{ccccccccccccccc}
            A_1&A_2&A_3&A_4
           \end{array}
      \right )
\label{A8}
\end{eqnarray}
where
\begin{eqnarray}
A_1=\frac{d^2\sqrt{d-1}}{2\sqrt{2}}
\left( \begin{array}{ccccccccccccccc}
            -(d-1)\sqrt{\frac{1}{d}}x^2-\frac{1}{d-1}\sqrt{\frac{1}{d}}y^2&\frac{1}{d-1}\sqrt{d-2}y^2&0&0\\
            0&0&0&0\\
            0&0&0&0\\
            0&0&0&0\\
           \end{array}
      \right ),
\label{A9}
\end{eqnarray}
\begin{eqnarray}
A_2=\frac{d^2\sqrt{d-1}}{2\sqrt{2}}
\left( \begin{array}{ccccccccccccccc}
            \frac{1}{d-1}\sqrt{d-2}y^2&-\frac{d-2}{d-1}\sqrt{d}y^2&0&0\\
            0&0&0&0\\
            0&0&0&0\\
            0&0&0&0\\
           \end{array}
      \right ),
\label{A10}
\end{eqnarray}
\begin{eqnarray}
A_3=\frac{d^2\sqrt{d-1}}{2\sqrt{2}}
\left( \begin{array}{ccccccccccccccc}
            0&0&-\sqrt{d}xy&0\\
            0&0&0&0\\
            0&0&0&0\\
            0&0&0&0\\
           \end{array}
      \right ),
\label{A11}
A_4=\frac{d^2\sqrt{d-1}}{2\sqrt{2}}
\left( \begin{array}{ccccccccccccccc}
            0&0&0&\sqrt{d}xy\\
            0&0&0&0\\
            0&0&0&0\\
            0&0&0&0\\
           \end{array}
      \right ).
\label{A12}
\end{eqnarray}
Therefore,
\begin{eqnarray}
\|T\|_{tr}&=&Tr(\sqrt{A_1A_1^T+A_1A_2^T+A_3A_3^T+A_4A_4^T})\nonumber\\
&=&\frac{d^2\sqrt{d-1}}{2\sqrt{2}}\sqrt{\frac{(d-1)^2}{d}(x^4+y^4)+(\frac{2}{d}+2d)x^2y^2}\nonumber\\
&=&\frac{d^2\sqrt{d-1}}{2\sqrt{2}}\sqrt{\frac{(d-1)^2}{d}[(x^2+y^2)^2-2x^2y^2]+(\frac{2}{d}+2d)x^2y^2}\nonumber\\
&=&\frac{d^2\sqrt{d-1}}{2\sqrt{2}}\sqrt{\frac{1}{d}+d-2+4x^2y^2}\nonumber\\
&\leq&\frac{d^2\sqrt{d-1}}{2\sqrt{2}}\sqrt{\frac{1}{d}+d-2+4(\frac{1}{2})^2} \nonumber\\
&=&\sqrt{\frac{d^3(d-1)(d^2-d+1)}{8}},
\end{eqnarray}
where the inequality is due to $xy\leq\frac{x^2+y^2}{2}=\frac{1}{2}$.

(3) If $|a\rangle$ is inseparable and $|a\rangle=x|-1, -2\rangle+y|-2, -1\rangle$, then
\begin{eqnarray}
\rho=|\psi\rangle\langle\psi|&=&x^2|-1,-1,-2\rangle\langle-1,-1,-2|+xy(|-1,-1,-2\rangle\langle-1,-2,-1|\nonumber\\
&+&|-1,-2,-1\rangle\langle-1,-1,-2|)+y^2|-1,-2,-1\rangle\langle-1,-2,-1|.
\end{eqnarray}
So we have $t_{-1,-1,-1}=tr(\rho\lambda_{-1}\otimes\lambda_{-1}\otimes\lambda_{-1})=\frac{d^2\sqrt{d-1}}{2\sqrt{2}}\sqrt{\frac{1}{d}}$, $t_{-1,-1,-2}=tr(\rho\lambda_{-1}\otimes\lambda_{-1}\otimes\lambda_{-2})=\frac{d^2\sqrt{d-1}}{2\sqrt{2}}(-\sqrt{d-2}x^2)$, $t_{-1,-2,-1}=tr(\rho\lambda_{-1}\otimes\lambda_{-2}\otimes\lambda_{-1})=\frac{d^2\sqrt{d-1}}{2\sqrt{2}}(-\sqrt{d-2}y^2)$, $t_{-1,-1-2,-1-2}=tr(\rho\lambda_{-1}\otimes\lambda_{-1-2}\otimes\lambda_{-1-2})=\frac{d^2\sqrt{d-1}}{2\sqrt{2}}(-\sqrt{d}xy)$, $t_{-1,-2-1,-2-1}=tr(\rho\lambda_{-1}\otimes\lambda_{-2-1}\otimes\lambda_{-2-1})=\frac{d^2\sqrt{d-1}}{2\sqrt{2}}(-\sqrt{d}xy)$, then
\begin{eqnarray}
T=
\left( \begin{array}{ccccccccccccccc}
            A_1&A_2&A_3&A_4
           \end{array}
      \right )
\label{A13}
\end{eqnarray}
where
\begin{eqnarray}
A_1=\frac{d^2\sqrt{d-1}}{2\sqrt{2}}
\left( \begin{array}{ccccccccccccccc}
            \sqrt{\frac{1}{d}}&-\sqrt{d-2}x^2&0&0\\
            0&0&0&0\\
            0&0&0&0\\
            0&0&0&0\\
           \end{array}
      \right ),
\label{A14}
A_2=\frac{d^2\sqrt{d-1}}{2\sqrt{2}}
\left( \begin{array}{ccccccccccccccc}
            -\sqrt{d-2}y^2&0&0&0\\
            0&0&0&0\\
            0&0&0&0\\
            0&0&0&0\\
           \end{array}
      \right ),
\label{A15}
\end{eqnarray}
\begin{eqnarray}
A_3=\frac{d^2\sqrt{d-1}}{2\sqrt{2}}
\left( \begin{array}{ccccccccccccccc}
            0&0&-\sqrt{d}xy&0\\
            0&0&0&0\\
            0&0&0&0\\
            0&0&0&0\\
           \end{array}
      \right ),
\label{A16}
A_4=\frac{d^2\sqrt{d-1}}{2\sqrt{2}}
\left( \begin{array}{ccccccccccccccc}
            0&0&0&-\sqrt{d}xy\\
            0&0&0&0\\
            0&0&0&0\\
            0&0&0&0\\
           \end{array}
      \right ).
\label{A17}
\end{eqnarray}
Therefore,
\begin{eqnarray}
\|T\|_{tr}&=&Tr(\sqrt{A_1A_1^T+A_1A_2^T+A_3A_3^T+A_4A_4^T})\nonumber\\
&=&\frac{d^2\sqrt{d-1}}{2\sqrt{2}}\sqrt{\frac{1}{d}+(d-2)(x^4+y^4)+2dx^2y^2}\nonumber\\
&=&\frac{d^2\sqrt{d-1}}{2\sqrt{2}}\sqrt{\frac{1}{d}+(d-2)[(x^2+y^2)^2-2x^2y^2]+2dx^2y^2}\nonumber\\
&=&\frac{d^2\sqrt{d-1}}{2\sqrt{2}}\sqrt{\frac{1}{d}+d-2+4x^2y^2}\nonumber\\
&\leq&\frac{d^2\sqrt{d-1}}{2\sqrt{2}}\sqrt{\frac{1}{d}+d-2+4(\frac{1}{2})^2} \nonumber\\
&=&\sqrt{\frac{d^3(d-1)(d^2-d+1)}{8}},
\end{eqnarray}
where the inequality is due to $xy\leq\frac{x^2+y^2}{2}=\frac{1}{2}$.
\end{proof}
{\bf Remark 2.} When d=2, the theorem says if $\|T\|_{tr}>\sqrt{\frac{d^3(d-1)(d^2-d+1)}{8}}=\sqrt{3}$, then the tripartite quantum state $\rho$ is GME. Thus, Thm. 2 is a generalization of Thm. 1.

{\bf Remark 3.} For a tri-qubit state $\rho\in\mathcal{H}_1^d\otimes\mathcal{H}_2^d\otimes\mathcal{H}_3^d$ under bipartition viewed as a bipartite state, one has if $\|T\|_{tr}>\sqrt{\frac{d^3(d^2-1)(d-1)}{4}}$, then the tri-qubit state is genuinely multipartite entangled by Ref. [29].
Since $\sqrt{\frac{d^3(d-1)(d^2-d+1)}{8}}-\sqrt{\frac{d^3(d^2-1)(d-1)}{4}}=\sqrt{\frac{d^3(d-1)}{4}}(\sqrt{\frac{d^2-d+1}{2}}-\sqrt{d^2-1})<0$ for $d\geq2$, Thm. 2 can detect more GME states than Ref. [29].

\textit{\textbf{Example 3.}} Consider the 3-qutrit state
\begin{eqnarray}
\rho=\frac{x}{27}\mathbf{1}_{27}+(1-x)|GGHZ\rangle\langle GGHZ|,
\end{eqnarray}
where $|GGHZ\rangle=\frac{1}{\sqrt{3}}(|000\rangle+|111\rangle+|222\rangle)$ is a generalized GHZ state and $x\in[0,1]$. For d=3, 
the tripartite quantum state $\rho$ is GME if $\|T\|_{tr}>\sqrt{\frac{d^3(d-1)(d^2-d+1)}{8}}=\frac{3\sqrt{21}}{2}$ by Thm. 2 or $\|T\|_{tr}>\sqrt{\frac{d^3(d^2-1)(d-1)}{4}}=6\sqrt{3}$ by Ref. [29]. One computes that $t_{10,10,10}=\frac{9(1-x)}{4}$, $t_{10,01,01}=-\frac{9(1-x)}{4}$, $t_{01,01,10}=-\frac{9(1-x)}{4}$, $t_{01,10,01}=-\frac{9(1-x)}{4}$, $t_{20,20,20}=\frac{9(1-x)}{4}$, $t_{20,02,02}=-\frac{9(1-x)}{4}$, $t_{02,02,20}=-\frac{9(1-x)}{4}$, $t_{02,20,02}=-\frac{9(1-x)}{4}$, $t_{2,2,2}=-\frac{3\sqrt{3}(1-x)}{4}$, $t_{1,1,2}=\frac{3\sqrt{3}(1-x)}{4}$, $t_{1,2,1}=\frac{3\sqrt{3}(1-x)}{4}$, $t_{2,1,1}=\frac{3\sqrt{3}(1-x)}{4}$, $t_{21,21,21}=\frac{9(1-x)}{4}$, $t_{21,12,12}=-\frac{9(1-x)}{4}$, $t_{12,21,12}=-\frac{9(1-x)}{4}$, and $t_{12,12,21}=-\frac{9(1-x)}{4}$.
Thus $\|T\|_{tr}=\frac{27\sqrt{2}+3\sqrt{6}}{2}(1-x)$. So when $0\leq x<0.69$, the state $\rho$ is GME by Thm. 2. However by Ref. [29] $\rho$ is GME
when $0\leq x<0.54$. Therefore, Theorem 2 can detect more GME states than Ref. [29].

\section{Conclusions}
In this article, we simplify the study of genuine multipartite entanglement by using local unitary equivalence and the correlation tensor. With the simplified argument
we have obtained stronger GME criteria for tri-qubit quantum states and tripartite higher-dimensional quantum states.
Using examples we have shown that our criteria outperform some of the well-known ones in detecting GME.

\bigskip

\textbf {Acknowledgements}
This work is supported in part by Simons Foundation under grant no. 523868 and NSFC under grant nos. 12126351 and 12126314.

\bigskip

\textbf{Data Availability Statement.} All data generated during the study are included in the article.

\end{CJK*}
\end{document}